\newcommand{\x}{\boldsymbol{x}}
\newcommand{\w}{\boldsymbol{w}}
\newcommand{\f}{\boldsymbol{f}}
\newcommand{\y}{\boldsymbol{y}}
\newcommand{\e}{\boldsymbol{e}}
\renewcommand{\u}{\boldsymbol{u}}
\newcommand{\h}{\boldsymbol{h}}
\renewcommand{\S}{\boldsymbol{S}}
\newcommand{\bLambda}{\boldsymbol{ \Lambda}}
\newcommand{\bDelta}{\boldsymbol{ \Delta}}
\newcommand{\supp}{supp}
\newcommand{\eps}{\boldsymbol{\varepsilon}}
\newcommand{\Bin}{Bin}
\providecommand{\abs}[1]{\lvert#1\rvert} 
\providecommand{\norm}[1]{\lVert#1\rVert}
\newcommand\redsout{\bgroup\markoverwith{\textcolor{red}{\rule[0.5ex]{2pt}{0.4pt}}}\ULon}
\newtheorem{theorem}{Theorem}
\newtheorem{proposition}{Proposition}
\newtheorem{corollary}{Corollary}
\newtheorem{lemma}[theorem]{Lemma}
\newtheorem{remark}{Remark}%
\newtheorem{definition}{Definition}%
\title{Complete Traceability Multimedia Fingerprinting Codes Resistant to Averaging Attack and Adversarial Noise with Optimal Rate}
\author{Ilya Vorobyev \\
	Skolkovo Institute of Science and Technology
}
\date{}
\begin{document}

	\maketitle

	
	\abstract{In this paper we consider complete traceability multimedia fingerprinting codes resistant to averaging attacks and adversarial noise. Recently it was shown that there are no such codes for the case of an arbitrary linear attack. However, for the case of averaging attacks complete traceability multimedia fingerprinting codes of exponential cardinality resistant to constant adversarial noise were constructed in~\cite{egorova2020existence}. We continue this work and provide an improved lower bound on the rate of these codes.}
	

	
	
	\maketitle
	
	\section{Introduction}\label{sec::intro}
	
	Multimedia fingerprinting codes are used to protect digital content from illegal copying and redistribution. 
	The key idea of this technique is to embed a unique signal, called watermark, into every copy, so that it can be tracked to its buyer~\cite{liu2005multimedia,trappe2003anti}. Watermarks should be able to protect the dealer from collusion attack, when a coalition of dishonest users (pirates) construct a new file, for example, by averaging their copies of the same content. By gathering a big enough coalition it is possible to sufficiently decrease the impact of each individual fingerprint, which makes it hard for the dealer to identify the pirates. In papers~\cite{cheng2011anti,cheng2011separable} the authors propose to use separable (or signature) codes to track all members of the coalition. 
	
	A model of multimedia fingerprinting with an adversarial noise was proposed in~\cite{egorova2019signature}, i.e. the coalition of dishonest users can add some noise to the content in order to hide their fingerprints. In~\cite{fan2020signature} it was shown that there are no multimedia codes resistant to a general linear attack and an adversarial noise. However, in~\cite{egorova2020existence} the authors proved that for the most common case of averaging attack one can construct multimedia codes with a non-vanishing rate.  We continue their research and prove a new lower bound on the rate, which has the same order as an upper bound. A detailed survey of state-of-the-art results can be found in~\cite{egorova2021separable}.
	
	The rest of the paper is structured as follows. In Section~\ref{sec::problem statement} we introduce the required notation and definitions and formally describe the problem. Our main result is proved
	in Section~\ref{sec::main result}. Section~\ref{sec::conclusion} concludes the paper and discusses some open problems.
	
	\section{Problem statement}\label{sec::problem statement}
	Vectors are denoted by bold letters, such as $\x$, 
	and the $i$th entry is referred to as $x_i$. The set of integers $\{1,\,2,\,\ldots,\,M\}$ is abbreviated by $[M]$. The sign $\norm{\cdot}$ stands for the Euclidean norm. A support $\supp(\x)$ of a vector $\x$ is a set of such coordinates $i$ that $x_i\ne 0$. Scalar (dot) product of vectors $\x$ and $\y$ is denoted as $\langle \x,\,\y\rangle$, greatest common divisor of integers $a$ and $b$ is referred to as $(a, b)$. For a given binary $n\times M$ matrix $H$ with columns $\h_1,\,\ldots,\,\h_M$ and set $I\subset [M]$ introduce notation for a result of averaging attack 
	$$\sigma (H\mid I)=\abs{I}^{-1}\sum\limits_{i\in I}\h_i.$$
	A binary entropy function $h(x)$ is defined as follows
	$$
	h(x)=-x\log_2x - (1-x)\log_2(1-x).
	$$ 
	Suppose that multimedia content is represented by a vector $\x\in \mathbb{R}^N$, which is being sold to $M$ users. Vector $\x$ is often called a host signal. To protect the content from unauthorized copying the dealer constructs a set of watermarks $\w_1, \,\ldots, \,\w_M$, which are also called fingerprints. The dealer fixes $n$ orthonormal vectors $\f_1, \,\ldots, \,\f_n$ of length $N,\,\f_i\in \mathbb{R}^N$ and forms watermarks $\w_i$ as linear combinations of $\f_j$ with binary coefficients $h_{ij}\in\{0,\,1\}$
	\begin{equation}
		\w_i=\sum\limits_{j=1}^n h_{ij}\,\f_j\text{ for }i\in[M].
	\end{equation}
	
	Then watermarks are  added to the host signal to obtain a final copy $\y_i$ for the $i$-th user
	\begin{equation*}
		\y_i=\x+\w_i.
	\end{equation*}
	We assume that $\norm{\w_i}\ll\norm{\x}$, so the added watermark doesn't change the content much.
	
	A coalition of dishonest users $I \subset [M]$ may come together to forge a new copy and redistribute it among other users. They can apply a linear attack, i.e., create a new copy $\y$ as a linear combination of their copies. In addition, they may add a noise vector $\eps$, $\norm{\eps}\ll \norm{\x}$,  to make it harder for the dealer to identify them.
	\begin{equation*}
		\y = \sum\limits_{i\in I}\lambda_i\,\y_i+\eps,
	\end{equation*}
	where $\lambda_i>0$ for each dishonest user in $I$ exactly participates in the attack,
$\lambda_i\in\mathbb{R}$ and $\sum_{i\in I}\lambda_i=1$ to ensure the multimedia content $\x$ not be changed. Especially in
averaging attack, the last condition is $\lambda_i = 1/\abs{I}$ for every $i \in I$ and it  implies that
	\begin{equation*}
		\y = \sum\limits_{i\in I}\lambda_i\,\y_i +\eps= \x+\sum\limits_{i\in I}\lambda_i\,\w_i+\eps.
	\end{equation*}
	Note that
	\begin{equation*}
		\norm{\y-\x}= \norm{\sum\limits_{i\in I}\lambda_i\,\w_i+\eps}\leq \max \norm{\w_i}+\norm{\eps}\ll \norm{\x},
	\end{equation*}
	therefore, $\y$ is close enough to the original signal $\x$.
	
	In order to find the coalition of dishonest users based on the forged copy $\y$, the dealer evaluates 
	\begin{align*}
		s_k&=\langle\y-\x, \f_k\rangle\\&=
		\langle\sum\limits_{i\in I}\lambda_i\,\sum\limits_{j=1}^nh_{ij}\,\f_j+\eps, \f_k\rangle=
		\sum\limits_{i\in I}\lambda_i\,h_{ik}+e_k,
	\end{align*}
	where $e_k=\langle\eps, \f_k\rangle$,
	and forms a syndrome vector $\S=(s_1,\,\ldots,\,s_n)$. The syndrome vector $\S$ can be equivalently defined through the matrix equation
	\begin{equation*}
		\S=H \bLambda^T+\e,
	\end{equation*}
	where $\Lambda=(\lambda_1,\,\ldots,\,\lambda_M)$, $\lambda_i=0$ for $i\notin I$, and $\e=(e_1,\,\ldots,\,e_n)$, $\norm{\e}\leq \norm{\eps}$.
	
	The dealer wants to design a matrix $H$ in such a way, that by observing $\S$ he always can find the support $\supp(\bLambda)$ if the size of the coalition $I$ is at most $t$. The following definition for a noiseless scenario was introduced in~\cite{egorova2019signature}.
	
	\begin{definition}
		A binary $n\times M$ matrix $H$ is called a $t$-multimedia  digital fingerprinting code with complete traceability ($t$-MDF code for short) if  
		for any two distinct coalitions $I$, $I'$, $\abs{I}, \abs{I'}\leq t$, we have
		\begin{equation*}
			H\bLambda^T\neq H\bLambda'^T
		\end{equation*}
		for any real vectors $\bLambda=(\lambda_1,\,\ldots,\,\lambda_M)$ and 
		$\bLambda'=(\lambda_1',\,\ldots,\,\lambda_M') $, such that $\lambda_i\geq 0$, 
		$\lambda_i'\geq 0$, $\sum\limits_{i=1}^M\lambda_i=\sum\limits_{i=1}^M\lambda_i'=1$, 
		$\supp(\bLambda)=I$, $\supp(\bLambda')=I'$.
	\end{definition}
	
	%
	
	Denote the maximal cardinality and the maximal rate of  $t$-MDF code of length $n$ as $M(n, t)$ and $R(n, t)=n^{-1}\log_2M(n,t)$. Denote by $R^*(t)$ and $R_*(t)$ an upper and a lower limits of $R(n, t)$ as $n\to\infty$. It is known that 
	\begin{equation}\label{eq::noiseless bounds}
		\Omega\left(\frac{\log_2t}{t}\right)\leq R_*(t)\leq R^*(t)\leq \frac{\log_2t}{2t}(1+o(1)).
	\end{equation}
	The upper bound of~\eqref{eq::noiseless bounds} can be derived from an upper bound for a binary adder channel from~\cite{d1981coding}. The lower bound is based on the following observation from~\cite{egorova2019signature}. If any $2t$ columns of a binary matrix $H$ are independent over the field of real numbers $\mathbb{R}$, then $H$ is a $t$-MDF code. Since parity check matrices of binary codes with a distance $d>2t$ poses this property, application of Goppa or BCH codes gives an explicit construction with a rate $R_*(t)\geq 1/t$\cite{egorova2019signature}. An improved lower bound $\Omega\left(\frac{\log_2t}{t}\right)$ can be derived from the results of the paper~\cite{bshouty2011parity}, where the authors proved the  existence of binary $n\times M$ matrices, $n^{-1}\log_2M=\Omega(\log_2t/t)$, such that any $2t$ columns are independent over the field $\mathds{Z}_p$, $p>2t$. We note that the latter result was proved with a probabilistic method, i.e. it's not explicit.

	Now we discuss a noisy scenario. In~\cite{fan2020signature} the authors defined $(t, \delta)$-light complete traceability multimedia digital fingerprinting codes and proved that they don't exist. Informally, if some coefficient $\lambda_i$ is sufficiently small, then it is possible to compensate the signal of $i$th user by the noise so that it would be impossible to identify this user. However,  for the case of averaging attacks, when all non-zero coefficients $\lambda_i$ are equal, the situation is different. Let us give the corresponding definition from~\cite{egorova2020existence}.
	\begin{definition}
		A binary $n\times M$ matrix $H$ is called a (Euclidean) $(t, \delta)$-light complete traceability code  if  
		for any two distinct coalitions $I_1$, $I_2$, $\abs{I_1}, \abs{I_2}\leq t$, we have
		\begin{equation*}
			\sigma(H\mid I_1)+\e_1\neq \sigma(H\mid I_2)+\e_2,
		\end{equation*}
		for any real vectors $\e_1, \e_2 \in \mathbb{R}^n$, $\norm{\e_1}, \norm{\e_2}\leq \delta$.
	\end{definition}
	In other words, Euclidean distance between vectors $\sigma(H\mid I_1)$ and $\sigma(H\mid I_2)$, generated by different coalitions $I_1$ and $I_2$, $\abs{I_1}, \abs{I_2}\leq t$, should be big, i.e.
	$$
	\norm{\sigma(H\mid I_1)-\sigma(H\mid I_2)}>2\delta.
	$$
	
	\begin{remark}
		Although an averaging attack is very restrictive for the coalition, in many papers authors consider only them instead of general linear attacks. One of the arguments is that averaging attack is the most fair choice since all the members of a coalition contribute the same proportion of data into a forged copy~\cite{cheng2011anti,trappe2003anti}.  However, in future research it may be reasonable to study a model with different coefficients $\lambda_i$, which are lower bounded by some constant.  
	\end{remark}
	
	Define codes for the case of noise vectors with a bounded cardinality of their support.
	\begin{definition}
		A binary $n\times M$ matrix $H$ is called a Hamming $(t, T)$-light complete traceability code  if  
		for any two distinct coalitions $I_1$, $I_2$, $\abs{I_1}, \abs{I_2}\leq t$, we have
		\begin{equation*}
			\sigma(H\mid I_1)+\e_1\neq \sigma(H\mid I_2)+\e_2,
		\end{equation*}
		for any real vectors $\e_1, \e_2 \in \mathbb{R}^n$, $\abs{\supp(\e_1)}, \abs{\supp(\e_2)}\leq T$.
	\end{definition}
	Equivalently, the number of different coordinates of vectors $\sigma(H\mid I_1)$ and $\sigma(H\mid I_2)$, generated by distinct coalitions $I_1$ and $I_2$, $\abs{I_1}, \abs{I_2}\leq t$, should be big, i.e.
	$$
	\abs{\supp(\sigma(H\mid I_1)-\sigma(H\mid I_2))}>2T.
	$$
	
	Denote the maximal cardinality of Euclidean and Hamming light complete traceability codes of length $n$ by $M_E(n, t, \delta)$ and $M_H(n, t, T)$ respectively. Define the rates of these codes as follows
	$$R_E(n, t, \delta)=\frac{\log_2M_E(n, t, \delta)}{n},$$
	$$R_H(n, t, T)=\frac{\log_2M_H(n, t, T)}{n}.$$
	
	In the following proposition we show an obvious connection between these two families of codes.
	\begin{proposition}\label{pr::connection}
		1. A Hamming $(t, T)$-light complete traceability code $H$ is a Euclidean  $(t, \delta)$-light complete traceability code for $\delta=\sqrt{2T}/(2t(t-1))$. \\
		2. A Euclidean $(t, \delta)$-light complete traceability code $H$ is a Hamming  $(t, T)$-light complete traceability code for $T=\lfloor2\delta^2\rfloor$. \\
		3. The rates of these codes are connected as follows
		\begin{align*}
			&R_E(n, t, \sqrt{2T}/(2t(t-1)))\geq R_H(n, t, T),\\
			&R_H(n, t, \lfloor2\delta^2\rfloor)\geq R_E(n, t, \delta).
		\end{align*}
	\end{proposition}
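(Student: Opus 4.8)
The plan is to pass to the equivalent reformulations recorded just after the two definitions: $H$ is a Euclidean $(t,\delta)$-light code if and only if $\|\sigma(H|I_1)-\sigma(H|I_2)\|>2\delta$ for every pair of distinct coalitions $I_1,I_2$ with $|I_1|,|I_2|\le t$, and $H$ is a Hamming $(t,T)$-light code if and only if $|\supp(\sigma(H|I_1)-\sigma(H|I_2))|>2T$ for every such pair. Parts~1 and~2 then reduce to comparing the Euclidean norm with the Hamming weight of the fixed vector $\boldsymbol{d}:=\sigma(H|I_1)-\sigma(H|I_2)$, and Part~3 is the immediate consequence of Parts~1 and~2 applied to codes of maximal cardinality.

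The only input needed is a two-sided bound on the nonzero coordinates of $\boldsymbol{d}$. Put $a=|I_1|$, $b=|I_2|$. The $k$-th coordinate of $\sigma(H|I_1)$ equals $\frac1a\sum_{i\in I_1}h_{ik}$, an integer multiple of $1/a$ lying in $[0,1]$; likewise the $k$-th coordinate of $\sigma(H|I_2)$ is an integer multiple of $1/b$ in $[0,1]$. Hence $d_k$ is an integer multiple of $1/\operatorname{lcm}(a,b)$ lying in $[-1,1]$, so whenever $d_k\ne0$ we have $1/\operatorname{lcm}(a,b)\le|d_k|\le1$. Since $a,b\le t$, we have $\operatorname{lcm}(a,b)\le t$ if $a=b$, and $\operatorname{lcm}(a,b)\le ab\le t(t-1)$ if $a\ne b$ (the smaller of the two sizes being then at most $t-1$); so for $t\ge2$ every nonzero coordinate of $\boldsymbol{d}$ satisfies $\frac1{t(t-1)}\le|d_k|\le1$.

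Part~1 now follows from $\|\boldsymbol{d}\|^2=\sum_{k\in\supp(\boldsymbol{d})}d_k^2\ge|\supp(\boldsymbol{d})|/(t(t-1))^2$: if $H$ is Hamming $(t,T)$-light then $|\supp(\boldsymbol{d})|>2T$, whence $\|\boldsymbol{d}\|>\sqrt{2T}/(t(t-1))=2\delta$, which is exactly the Euclidean condition. Part~2 follows from the complementary estimate $|\supp(\boldsymbol{d})|=\sum_{k\in\supp(\boldsymbol{d})}1\ge\sum_{k\in\supp(\boldsymbol{d})}d_k^2=\|\boldsymbol{d}\|^2$: if $H$ is Euclidean $(t,\delta)$-light then $\|\boldsymbol{d}\|>2\delta$, so $|\supp(\boldsymbol{d})|>4\delta^2\ge2\lfloor2\delta^2\rfloor=2T$, the Hamming condition. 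For Part~3, a Hamming $(t,T)$-light code of cardinality $M_H(n,t,T)$ is by Part~1 also a Euclidean $(t,\sqrt{2T}/(2t(t-1)))$-light code, hence $M_E(n,t,\sqrt{2T}/(2t(t-1)))\ge M_H(n,t,T)$ and, by the definition of the rates, $R_E(n,t,\sqrt{2T}/(2t(t-1)))\ge R_H(n,t,T)$; the second rate inequality follows from Part~2 in the same way.

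There is essentially no obstacle here: the proof is elementary arithmetic combined with the two equivalences stated in the excerpt. The one place worth a moment's attention is the lower bound $|d_k|\ge1/(t(t-1))$ — specifically the remark that two \emph{distinct} coalition sizes in $[t]$ cannot both equal $t$, so the relevant common denominator is bounded by $t(t-1)$ rather than $t^2$; this is precisely what yields the constant $2t(t-1)$ in the statement.
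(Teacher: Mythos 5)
Your proof is correct and follows essentially the same route as the paper's: both parts rest on the two observations that every nonzero coordinate of $\sigma(H|I_1)-\sigma(H|I_2)$ has absolute value at least $1/(t(t-1))$ and at most $1$, combined with the support/norm reformulations of the two definitions. The only differences are cosmetic — you argue directly where the paper argues by contradiction, and you spell out (via the lcm of the coalition sizes) the bound $1/(t(t-1))$ that the paper merely asserts.
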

	\begin{proof}
		1. Assume that a Hamming  $(t, T)$-light complete traceability code $H$ is not a Euclidean  $(t, \sqrt{2T}/(2t(t-1)))$-light complete traceability code, i.e. there exist two coalitions $I_1$ and $I_2$, such that 
		$$
		\norm{\bDelta}\leq 2\delta,\text{ where }\bDelta=\sigma(H\mid I_1)-\sigma(H\mid I_2), \;\delta=\sqrt{2T}/(2t(t-1)).
		$$
		Since the minimal positive value of coordinate $\Delta_i$ is at least $1/(t(t-1))$,  we conclude that there are at most 
		$$4\delta^2t^2(t-1)^2= 2T$$
		coordinates, in which  $\sigma(H\mid I_1)$ and $\sigma(H\mid I_2)$ are different. 
		Hence, there are two vectors $\u_1$, $\u_2$, $\abs{\supp(\u_1)}, \abs{\supp(\u_2)}\leq T$, such that
		$$
		\sigma(H\mid I_1)+\u_1=\sigma(H\mid I_2)+\u_2.
		$$
		Therefore, $H$ is not a Hamming  $(t, T)$-light complete traceability code.
		This contradiction proves the first claim.
		
		2. Assume that a Euclidean $(t, \delta)$-light complete traceability  code $H$ is not a Hamming  $(t, \lfloor2\delta^2\rfloor)$-light complete traceability code, i.e. there exist two coalitions $I_1$ and $I_2$, such that
		$$
		\abs{\supp(\bDelta)} \leq 2T, \text{ where } \bDelta = \sigma(H\mid I_1)-\sigma(H\mid I_2), \;T=\lfloor2\delta^2\rfloor.
		$$
		
		Since the absolute value of every coordinate of the vector $\bDelta$ is at most 1, we have 
		$$
		\norm{\bDelta}\leq \sqrt{2T}\leq 2\delta,
		$$
		which contradicts the definition of Euclidean $(t, \delta)$-light complete traceability codes.
		
		3. Claim 3 is an obvious corollary of claims 1 and 2.
	\end{proof}
	
	%
	%
	%
	In~\cite{egorova2020existence} it was proved that $\liminf_{n\to\infty}R_E(n, t, \delta)\geq \Omega(1/t)$ for constant $\delta$. An upper bound is the same as in the noiseless case, $\limsup_{n\to\infty}R_E(n, t, \delta)\leq \frac{\log_2t}{2t}(1+o(1))$, since the proof works for an averaging attack. Therefore, there is a $\Theta(\log_2t)$ gap between the lower and upper bound. We eliminate this gap in the next section.

	\section{Lower bound on the rate of light complete traceability codes}\label{sec::main result}
	
	In this section we prove
	\begin{theorem}\label{th::main result} For $\tau<1/4$
		\begin{equation}
			\liminf_{n\to\infty}R_H(n, t, \lfloor \tau n\rfloor)\geq \frac{(1-2\tau)\log_2t}{6t}(1+o(1)),\;\;t\to\infty.
		\end{equation}
	\end{theorem}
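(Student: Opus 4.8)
The proof will proceed by the probabilistic method. I would take $H$ to be a random binary $n\times M$ matrix with independent, uniformly distributed entries, put $M=2^{Rn}$, and show that for $R$ just below the claimed bound such an $H$ is a Hamming $(t,\lfloor\tau n\rfloor)$-light complete traceability code with positive probability. The first step is to reduce the defining property to an estimate on a single ``difference pattern''. Fix distinct coalitions $I_1,I_2$ with $|I_1|,|I_2|\le t$ and write $\bDelta=\sigma(H|I_1)-\sigma(H|I_2)$. Multiplying through by $|I_1|\,|I_2|$, one sees that $\Delta_k=0$ if and only if $\langle\w,\h_k\rangle=0$, where $\h_k\in\{0,1\}^M$ is the $k$-th row of $H$ and $\w=\w(I_1,I_2)\in\mathbb{Z}^M$ is given by $w_i=|I_2|$ for $i\in I_1\setminus I_2$, $w_i=-|I_1|$ for $i\in I_2\setminus I_1$, $w_i=|I_2|-|I_1|$ for $i\in I_1\cap I_2$, and $w_i=0$ otherwise. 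This $\w$ is nonzero (because $I_1\ne I_2$), its support lies in $I_1\cup I_2$ and so has size $s\le 2t$, it takes at most three distinct nonzero values, and it depends only on the ``type'' of the pair rather than on the particular labels. Since the rows $\h_k$ are i.i.d., the number of coordinates with $\Delta_k=0$ is distributed as $\Bin(n,p_{\w})$, where $p_{\w}$ is the probability that a uniformly random vector in $\{0,1\}^M$ is orthogonal to $\w$; the code fails for the pair $(I_1,I_2)$ exactly when $\Bin(n,p_{\w})\ge n-2\lfloor\tau n\rfloor$.

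The technical core is an anti-concentration estimate: for every admissible $\w$ with $|\supp(\w)|=s$ one has $p_{\w}\le f(s)$ for some $f(s)=\Theta(s^{-1/2})$. This can be obtained by partitioning $\supp(\w)$ according to the (at most three) values of $\w$, conditioning on all coordinates except those carrying the most frequent value --- a block of size at least $s/3$ --- and using the standard bound $\binom{m}{\lfloor m/2\rfloor}2^{-m}=O(m^{-1/2})$ on the largest atom of a symmetric binomial. One also has to check that no unbalanced choice of the three values does essentially better, which is a short Littlewood--Offord-type argument. Plugging $p_{\w}\le f(s)$ into a Chernoff bound yields, for each type $\w$ with support size $s$,
\[
\Pr\big[\,H\text{ fails for }\w\,\big]\ \le\ 2^{-n(\gamma_s+o(1))},\qquad \gamma_s=(1-2\tau)\log_2\frac{1}{f(s)}-h(1-2\tau),
\]
so that $\gamma_s=\tfrac12(1-2\tau)\log_2 s-O(1)$.

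The last step is a union bound organized by support size. The number of types $\w$ with $|\supp(\w)|=s$ is at most $\binom{M}{s}$ times a factor $C^{s}\,\mathrm{poly}(t)=2^{o(n)}$, accounting for the at most $3^{s}$ ways to split the support among $I_1\setminus I_2$, $I_2\setminus I_1$, $I_1\cap I_2$ and the $O(t^{2})$ choices of $(|I_1|,|I_2|)$. Hence the total failure probability is at most $\sum_{s=2}^{2t}2^{\,s\log_2 M+o(n)-n\gamma_s}$, which tends to $0$ provided $R<\min_{2\le s\le 2t}\gamma_s/s-o(1)$. Since $\gamma_s/s$ is of order $\tfrac{(1-2\tau)\log_2 s}{2s}$ and is decreasing for $s\ge 3$, while $\gamma_2/2$ is a positive constant, the minimum is attained near the largest admissible support size $s=2t$; evaluating $\gamma_{2t}/(2t)$ and keeping only leading terms as $t\to\infty$ then gives a lower bound on $R$ of the required order $\tfrac{\log_2 t}{t}$, with the precise constant as in the theorem statement, and with it the stated lower bound on $R_H(n,t,\lfloor\tau n\rfloor)$.

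The step I expect to be the main obstacle is the anti-concentration estimate: one needs $p_{\w}=O(s^{-1/2})$ (with a good implied constant) uniformly over all admissible coefficient patterns --- all permitted sizes $|I_1|,|I_2|$ and all overlap structures --- and not merely for the symmetric pattern arising from two disjoint equal-size coalitions, since any loss here feeds directly into the constant in the final rate. A secondary difficulty is the bookkeeping in the union bound: counting the distinct types of each support size, checking that the coefficient- and size-multiplicities cost only $2^{o(n)}$, and confirming that the minimum of $\gamma_s/s$ over $2\le s\le 2t$ genuinely sits (up to lower-order terms) at the largest support size rather than at some intermediate value of $s$.
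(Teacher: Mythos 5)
Your overall strategy (random $n\times M$ matrix with $M=2^{Rn}$, a per-row collision estimate, Chernoff plus a union bound) is the same as the paper's, but the way you organize the union bound is genuinely different and, where it works, sharper. The paper fixes a pair of coalitions of sizes $r\le q\le t$, bounds the probability that a single row gives equal averages by $p(q)=q^{-1/3+o(1)}$ (the exponent $1/3$ is forced by a gcd/divisibility argument needed for nearly identical coalitions, whose symmetric difference is tiny), and then unions over all $<M^{q+r}$ pairs; paying $2qRn$ against a gain of roughly $\frac{1-\tau}{3}\log_2 q$ per row is exactly what produces the $6$ in the denominator. You instead observe, correctly, that the failure event depends only on the difference pattern $\w$, of which there are only about $\binom{M}{s}3^s t^2$ with $|\supp(\w)|=s$, and this is what lets you keep the stronger anti-concentration $p_{\w}=O(s^{-1/2})$: your conditioning on the largest value class (size $\ge s/3$) already gives this uniformly over all admissible patterns, and no Littlewood--Offord refinement is needed. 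Carried out, this route gives $\min_s\gamma_s/s$ of order $\frac{(1-2\tau)\log_2 t}{4t}$, i.e.\ a better denominator than the paper's.

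The genuine gap is at small support sizes, and it interacts with the failure threshold you chose. With the threshold $\Bin(n,p_{\w})\ge n-2\lfloor\tau n\rfloor$ (faithful to the separation $|\supp(\sigma(H|I_1)-\sigma(H|I_2))|>2T$), the $s=2$ patterns, e.g.\ $I_1=J\cup\{a\}$, $I_2=J\cup\{b\}$, have $p_{\w}=1/2$ exactly, so their failure probability does not vanish once $\tau\ge 1/4$; in fact by the Plotkin bound no positive-rate code keeps all pairs of columns at Hamming distance $>2\tau n$ when $\tau>1/4$, so for $\tau\in(1/4,1/2)$ the union-bound target you set is unattainable and your claim that ``$\gamma_2/2$ is a positive constant'' fails (it fails even somewhat earlier with your simplified $\gamma_s$, which drops part of the binomial exponent; you should keep the full divergence $D((1-2\tau)\,\|\,f(s))$). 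The paper's proof covers all $\tau<1/2$ only because it demands merely $\lfloor\tau n\rfloor$ good rows per pair, a weaker requirement than the $2T$ separation stated after the definition. If you adopt that same requirement, your argument does close for every $\tau<1/2$ and in fact yields $\frac{(1-\tau)\log_2 t}{4t}(1+o(1))$, which implies the stated bound; as written, though, your final evaluation does not ``give the precise constant of the theorem'' --- you get $(1-2\tau)/4$ rather than $(1-\tau)/6$, and the former dominates the latter only for $\tau\le 1/4$.
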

	
	Combining Theorem~\ref{th::main result} and Proposition~\ref{pr::connection} we obtain the following 
	\begin{corollary}
		For $\delta^2=\alpha n$, $\alpha<1/8t^2(t-1)^2)$, we have
		\begin{equation*}
			\liminf_{n\to\infty}R_E(n, t, \sqrt{\alpha n})\geq \frac{(1-2\tau)\log_2t}{6t}(1+o(1)),\;\;t\to\infty,
		\end{equation*}
		where $\tau=2\alpha t^2(t-1)^2.$
	\end{corollary}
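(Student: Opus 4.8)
The corollary follows directly from Theorem~\ref{th::main result} together with the first part of Proposition~\ref{pr::connection}, so the plan is simply to translate parameters. By Proposition~\ref{pr::connection} a Hamming $(t,T)$-light code is a Euclidean $(t,\delta_T)$-light code with $\delta_T^2=T/\bigl(2t^2(t-1)^2\bigr)$, so for $\delta^2=\alpha n$ it is a Euclidean $(t,\sqrt{\alpha n})$-light code as soon as $T\ge 2\alpha t^2(t-1)^2\,n=\tau n$. Taking $T=\lceil\tau n\rceil$, using $\lceil\tau n\rceil\le\lfloor(\tau+\varepsilon)n\rfloor$ for $n$ large together with the monotonicity of $R_H(n,t,\cdot)$ in its last argument, Theorem~\ref{th::main result} applied with parameter $\tau+\varepsilon<1/2$ gives $\liminf_n R_E(n,t,\sqrt{\alpha n})\ge\liminf_n R_H(n,t,\lceil\tau n\rceil)\ge\frac{(1-\tau-\varepsilon)\log_2 t}{6t}(1+o(1))$, and letting $\varepsilon\downarrow 0$ finishes it. Hence the real work is in Theorem~\ref{th::main result}, which I would prove as follows.

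The plan for Theorem~\ref{th::main result} is the probabilistic method, possibly with an alteration step. Fix $R=\frac{(1-\tau)\log_2 t}{6t}$, put $M=\lceil 2^{Rn}\rceil$, and let $H$ be a random $n\times M$ binary matrix with independent $\mathrm{Bernoulli}(1/2)$ entries; with $T=\lfloor\tau n\rfloor$ it is enough to show that with positive probability no pair of distinct coalitions $I_1,I_2$, $|I_1|,|I_2|\le t$, has $|\supp(\bDelta)|\le 2T$, where $\bDelta=\sigma(H|I_1)-\sigma(H|I_2)$. Writing $t_1=|I_1|$, $t_2=|I_2|$, $a_k=\sum_{i\in I_1}h_{ki}$, $b_k=\sum_{i\in I_2}h_{ki}$, the coordinate $k$ lies outside $\supp(\bDelta)$ exactly when $t_2a_k=t_1b_k$; these events are independent over $k$, so $|\supp(\bDelta)|$ is a sum of $n$ i.i.d.\ indicators, and I would bound $\Pr\bigl[|\supp(\bDelta)|\le 2T\bigr]$ by a Chernoff bound and then union bound over all pairs.

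The decisive structural step is to sort the pairs by type and to observe that when $t_1=t_2$ the collision event $t_2a_k=t_1b_k$ depends only on $\sum_{i\in A}h_{ki}$ and $\sum_{i\in B}h_{ki}$, where $A=I_1\setminus I_2$ and $B=I_2\setminus I_1$, not on $I_1\cap I_2$. Thus the equal-size pairs give only $\sum_{j\ge1}\binom Mj^2\le 2M^{2t}$ distinct conditions, indexed by disjoint pairs $(A,B)$ with $|A|=|B|=j\le t$, the per-coordinate collision probability being $q_j=4^{-j}\binom{2j}{j}$, which equals $\tfrac12$ for $j=1$ and decays like $(\pi j)^{-1/2}$. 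The Chernoff bound then gives $\Pr\bigl[|\supp(\bDelta)|\le 2T\bigr]\le 2^{-nD(1-2\tau\,\|\,q_j)}$ whenever $q_j<1-2\tau$, and for small $q_j$ this exponent is $\approx(1-2\tau)\log_2(1/q_j)\approx\tfrac12(1-2\tau)\log_2 j$. For pairs with $t_1\neq t_2$ the difference vector does involve $I_1\cap I_2$, so the number of conditions may be as large as $M^{|I_1\cup I_2|}$ with $|I_1\cup I_2|$ up to $2t$; there one needs to check that the collision probability is $O(t^{-1/2})$ unless $t_1,t_2$ have a large common divisor, in which case the set of integer solutions $(a,b)$ of $t_2a=t_1b$ in the relevant range is small and the probability, though constant, stays bounded below $1$.

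Finally I would assemble the union bound, balancing the count of each type against its Chernoff exponent. The equal-size pairs with $j$ of order $t$ give at most $M^{2t}$ conditions, each failing with probability $2^{-n\cdot\frac{1-2\tau}{2}\log_2 t\,(1+o(1))}$, which already forces $R=O(\log t/t)$; the small-$j$ equal-size pairs are few and the unequal-size pairs carry large exponents, so with the stated $R$ the total is $o(1)$ (or, deleting one endpoint per bad pair, at least $M/2$ columns survive). The step I expect to be the main obstacle is exactly this simultaneous control of all types. The small-symmetric-difference pairs — notably $|I_1\triangle I_2|=2$, where $\bDelta=\frac1{t_1}(\h_i-\h_j)$ and the requirement degenerates to $|\supp(\h_i-\h_j)|>2T$, a minimum-distance condition on the columns — have only a constant Chernoff exponent, so the argument must use the structural observation above to see that there are only about $M^2$ of them, not the naive $M^{t+1}$; without this, those pairs cap the rate at $\Theta(1/t)$, as in~\cite{egorova2020existence}. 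The pairs with large symmetric difference, and the common-divisor unequal-size ones, are where $q$ genuinely drops to $\Theta(t^{-1/2})$, so matching that $\Theta(\log t)$ exponent against the $\approx M^{2t}$ configurations is what makes the $\log t/t$ rate possible; the exact constant $1/6$ then drops out of a (routine but careful) optimisation over all types, and that optimisation is the only place where a nontrivial amount of calculation is needed.
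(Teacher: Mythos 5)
Your handling of the statement itself is correct and coincides with the paper's: the corollary is obtained exactly by combining Theorem~\ref{th::main result} with Proposition~\ref{pr::connection}, and your translation $T\ge 2\alpha t^2(t-1)^2 n=\tau n$ together with the rounding/monotonicity step and $\varepsilon\downarrow 0$ is just a careful write-up of that one-line combination.

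Most of your text, however, is a re-proof of Theorem~\ref{th::main result}, which the paper proves separately, and there your sketch has a genuine gap exactly where the paper's Lemma~\ref{lem::bad row} does the real work, namely the pairs with $|I_1|=q\neq r=|I_2|$. You dispose of the ``large common divisor'' subcase by saying the per-coordinate collision probability ``though constant, stays bounded below $1$.'' That is not enough: for unequal sizes the event does not reduce to the symmetric difference, so such pairs can number up to $M^{q+r}$ with $q+r=\Theta(t)$, and a per-coordinate probability bounded only by a constant $c<1$ gives a Chernoff exponent independent of $t$; the union bound then forces $(q+r)R=O(1)$, i.e.\ $R=O(1/t)$, which kills the claimed $\Theta(\log_2 t/t)$ rate. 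What must be shown (and what the paper's Lemma~\ref{lem::bad row} shows) is that for unequal sizes the probability is never merely constant once $q$ is large: if $q-|I_1\cap I_2|$ exceeds $q^{2/3}$ a max-binomial bound gives $O(q^{-1/3})$, and otherwise $d=(q,r)\le q-r\le q-|I_1\cap I_2|\le q^{2/3}$, so a bad coordinate forces a $\mathrm{Bin}(q,1/2)$ variable to be divisible by $q/d\ge q^{1/3}$, which by Hoeffding plus counting multiples has probability $q^{-1/3+o(1)}$. (Your gcd intuition is also backwards: a large gcd yields \emph{many} integer solutions of $t_2a=t_1b$, not few; but in that regime $q-|I_1\cap I_2|\ge q-r\ge d$ is large, so the first case applies.) On the other hand, your symmetric-difference reduction for equal-size coalitions is a genuinely different — and useful — ingredient not present in the paper, since the paper's $q=r$ computation is written for disjoint supports; with your ingredients, though, the optimization would produce a constant of the form $(1-2\tau)/4$ rather than the paper's $(1-\tau)/6$, so the assertion that ``the exact constant $1/6$ drops out'' is not substantiated, albeit harmless for a lower bound. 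In short: the corollary step is fine; the volunteered proof of Theorem~\ref{th::main result} is incomplete in the unequal-size case.
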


	For the case of small noise $\delta=o(\sqrt{n})$ and $n\to\infty$ a new lower bound has the following form
	$$
	\liminf_{n\to\infty}R_E(n, t, \delta)
    \geq  \lim\limits_{\alpha\to 0}
    \liminf_{n\to\infty}R_E(n, t, \sqrt{\alpha n})
    \geq\frac{\log_2t}{6t}(1+o(1)).
	$$
	It improves the previous lower bound $\Omega(1/t)$ and has the same order $\Theta(\log_2t/t)$ as the upper bound. However, the new bound is not explicit, i.e. there is no effective encoding or decoding algorithm for a new code.
	
	\begin{proof}[Proof of Theorem~\ref{th::main result}]
		Consider a random $n\times M$ matrix $H$, $M=2^{Rn}$, in which every entry is chosen independently and equals 1 with a probability $1/2$. The value of $R$ will be specified later.
		Fix two coalitions $I_1$ and $I_2$, $\abs{I_1}, \abs{I_2}\le t$. Call a row $r$ good, if 
  
$$
 \frac{\sum\limits_{i_1\in I_1} h_{r,i_1}}{\abs{I_1}}\ne \frac{\sum\limits_{i_2\in I_2} h_{r,i_2}}{\abs{I_2}}.
$$ 
Otherwise, we call a row bad. Call a pair of coalitions good, if there are at least  $2T+1$ good rows for them. Otherwise, call such a pair bad. Then the condition that $H$ is a Hamming  $(t, T)$-light complete traceability code is equivalent to the absence of  bad pairs of coalitions.
		
		We say that a bad pair of coalitions $I_1$ and $I_2$  is minimal, if there is no another bad pair of coalitions $I_1'$ and $I_2'$, $I_1'\cup I_2'\subset I_1\cup I_2$. For example, a bad pair of intersecting coalitions $I_1$ and $I_2$ with $\abs{I_1}=\abs{I_2}$ can't be minimal, since it contains another bad pair $I_1\setminus I_2$ and $I_2\setminus I_1$.
		Obviously, to prove that $H$ is a Hamming  $(t, T)$-light complete traceability code it is enough to check that there are no minimal bad pairs of coalitions.
		
		We are going to prove that a mathematical expectation of the number of minimal bad pairs of coalitions is tending to zero as $n\to\infty$. By Markov's inequality this would imply that for big enough $n$ there exists $(t, T)$-light complete traceability Hamming code with the rate $R$ and $T=\lfloor\tau n\rfloor$.
		
		Now we estimate the probability that a row $i$ is bad for coalitions $I_1$ and $I_2$.
		\begin{lemma}\label{lem::bad row}
			The probability that a row is bad for coalitions $I_1$ and $I_2$, $\abs{I_1}=q$, $\abs{I_2}=r$, $q>r$, is upper bounded by $p(q)=q^{-1/3+o(1)}$, $q\to\infty$. For non-intersecting coalitions $I_1$ and $I_2$, $\abs{I_1}=\abs{I_2}=q$, the probability that a row is bad is upper bounded by $p(q)=q^{-1/2+o(1)}$. Moreover,  $p(q)\leq 1/2$ for all $q$.
		\end{lemma}
		\begin{proof}[Proof of Lemma~\ref{lem::bad row}]
			For the case $q=r$, $I_1\cap I_2=\emptyset$,  probability of a bad row is equal to
			\begin{align*}
				2^{-2q}\sum\limits_{i=0}^q\binom{q}{i}\binom{q}{i}=2^{-2q}\binom{2q}{q}=O(q^{-0.5}),
			\end{align*}
			which is 
			not greater than $1/2$ for all $q$.
			
			Now assume that $q>r$.
			Denote the cardinality of the intersection of $I_1$ and $I_2$ as $k$. Consider two cases $q-k>s$ and $q-k\leq s$, $s=q^{2/3}$.
			
			\textit{The first case $q-k>s$}. 
			Note that for any distribution of zeroes and ones in columns from $I_2$ there exists at most one fraction of ones in $I_1\setminus I_2$ which makes the row bad. Hence the probability of obtaining a bad string is upper bounded by
			$$
			\max\limits_{l}\frac{\binom{q-k}{l}}{2^{q-k}}\leq 1/2.
			$$
			For $q\to\infty$ 
			this bound looks as follows
            
			$$
			\max\limits_{l}\frac{\binom{q-k}{l}}{2^{q-k}}<\frac{1+o(1)}{\sqrt{\pi(q-k)/2}}<\frac{1+o(1)}{\sqrt{\pi s/2}}=O(q^{-1/3}),
			$$
            where in the first inequality a Stirling's approximation $\binom{q-k}{(q-k)/2}\sim \frac{2^{q-k}}{\sqrt{\pi(q-k)/2}}$ for a maximal binomial coefficient was used.
			
			\textit{The second case $q-k\leq s$}. 
			Observe that the greatest common divisor $d=(q,r)$ is at most $s$, since $d\leq q - r\leq q - k\leq s$. 
			Since $s_1/q=s_2/r$ implies $(q/d) \mid s_1$ and  $(r/d) \mid s_2$, it is readily seen that for a bad row $i$ the $i$th coordinate in sums $\sum\limits_{j\in I_1}\h_j$ and  $\sum\limits_{j\in I_2}\h_j$  should be divided by $q/d$ and $r/d$ respectively. Therefore,  probability of  a bad row can be upper bounded by the probability $P$ that a binomial random variable $\xi\sim\Bin(q, 1/2)$ is divided by $q/d\geq q^{1/3}$. One can see that $P<1/2$ for $q/d>1$. Now we prove that for $q\to\infty$ the probability $P$ is at most $q^{-1/3+o(1)}$.
			
			By Hoeffding's inequality~\cite{hoeffding1994probability}
			$$
			\Pr(\abs{\xi-q/2}>\sqrt{q\ln q})\leq 2e^{-2\ln q}=O(q^{-2}).
			$$
			Define $S=[\lfloor\frac{q/2-\sqrt{q\ln q}}{q/d}\rfloor, \lfloor\frac{q/2+\sqrt{q\ln q}}{q/d}\rfloor]$.
			Then we can estimate $P$ as follows
			\begin{align*}
				P&= \sum\limits_{l=0}^{d}\Pr(\xi=l\cdot q/d)\\
                &\leq 
				\sum\limits_{l\in S}
				\Pr(\xi=l\cdot q/d) + \Pr(\abs{\xi-q/2}>\sqrt{q\ln q})
				\\
                &\leq \max\limits_x \Pr(\xi=x)\cdot \left\lceil\frac{2\sqrt{q\ln q}+1}{q/d}\right\rceil+O(q^{-2})
                \\
				&\leq \frac{1}{\sqrt{q}}\cdot \left\lceil\frac{2\sqrt{q\ln q}+1}{q/d}\right\rceil+O(q^{-2})
				\\&\leq 
				O(q^{-1/3}\sqrt{\ln q})
				=q^{-1/3+o(1)}.
			\end{align*}
		\end{proof}

		To estimate a mathematical expectation $E$ of  the number of minimal bad pairs of coalitions we iterate over all $< M^{q+r}$ pairs of coalitions having sizes $q$ and $r$, $q>r$, all pairs of non-intersecting coalitions of size $q$,  and over all possible amounts $L<2T+1$ of good rows.
		
		\begin{align*}
			E&<\sum\limits_{0<r\leq q\leq t}M^{q+r}\sum\limits_{L=0}^{2T}\binom{n}{L}(1-p(q))^Lp(q)^{n-L}\\
			&\stackrel{a)}{<}\sum\limits_{q=1}^{t} qM^{2q}(2T+1)\binom{n}{2T}(1-p(q))^{2T}p(q)^{n-2T}\\
			&=\sum\limits_{q=1}^{t} 2^{2qRn}p(q)^n2^{(h(2\tau) + o(1))n}\left(\frac{1-p(q)}{p(q)}\right)^{2\tau n}\\
			&=\sum\limits_{q=1}^{t}2^{A(q)n},
		\end{align*}
		where
		\begin{align*}
			A(q)=2qR +\log_2p(q) + h(2\tau) + 2\tau\log_2\left(\frac{1-p(q)}{p(q)}\right).
		\end{align*}
		In inequality a) we used the fact that $$\binom{n}{L}(1-p(q))^Lp(q)^{n-L}\leq \binom{n}{2T}(1-p(q))^{2T}p(q)^{n-2T},$$ since 
		$2\tau<1/2\leq1-p(q)$ and $2T\leq (1-p(q))n$.
		
		Let $\hat{R}=\min\limits_{q\in[1,t]}-\frac{\log_2p(q) + h(2\tau) + 2\tau\log_2\left(\frac{1-p(q)}{p(q)}\right)}{2q}$. Note that since $2\tau<1-p(q)$ for all $q$ then $\hat{R}>0$. For $R<\hat{R}$ the condition $A(q)<0$ holds, hence, $E\to 0$ as $q\to\infty$, which implies that the rate $\hat{R}$ is achievable. For $t\to\infty$ the minimum would be attained at $q$, which tends to $\infty$, so
		$$
		\hat{R}=\frac{(1-2\tau)\log_2t}{6t}(1+o(1)),\quad t\to\infty.
		$$
		Theorem~\ref{th::main result} is proved.
	\end{proof}

	\section{Conclusion}\label{sec::conclusion}
	In this paper we proved a new lower bound on the rate of Euclidean $(t, \delta)$-light complete traceability codes, which shows that the optimal rate has order $\Theta(\log_2t/t)$. However, the proof uses probabilistic arguments and does not provide an explicit construction with efficient encoding and decoding algorithms. A natural open problem is to design a code with an optimal rate and efficient decoding algorithm.

    Coefficient $\lambda_i$ shows what proportion of the original content was contributed by user $i$ into an illegal copy. It is natural that if the contribution of user $i$ was very small then it will be hard for a dealer to identify such user.  So, another open task is to design a code capable of finding all members of a coalition for an adversarial noise and linear attack, whose coefficients $\lambda_i$ are lower bounded by some constant, i.e. all users, whose contribution was big enough.
	\section*{Acknowledgment}
	The reported study was supported by RFBR and National Science Foundation of Bulgaria (NSFB), project number 20-51-18002, and by RFBR, grant no. 20-01-00559.

		\bibliographystyle{IEEEtran}
	\bibliography{pirates}

\begin{thebibliography}{10}
\providecommand{\url}[1]{#1}
\csname url@samestyle\endcsname
\providecommand{\newblock}{\relax}
\providecommand{\bibinfo}[2]{#2}
\providecommand{\BIBentrySTDinterwordspacing}{\spaceskip=0pt\relax}
\providecommand{\BIBentryALTinterwordstretchfactor}{4}
\providecommand{\BIBentryALTinterwordspacing}{\spaceskip=\fontdimen2\font plus
\BIBentryALTinterwordstretchfactor\fontdimen3\font minus
  \fontdimen4\font\relax}
\providecommand{\BIBforeignlanguage}[2]{{%
\expandafter\ifx\csname l@#1\endcsname\relax
\typeout{** WARNING: IEEEtran.bst: No hyphenation pattern has been}%
\typeout{** loaded for the language `#1'. Using the pattern for}%
\typeout{** the default language instead.}%
\else
\language=\csname l@#1\endcsname
\fi
#2}}
\providecommand{\BIBdecl}{\relax}
\BIBdecl

\bibitem{egorova2020existence}
E.~Egorova, M.~Fernandez, G.~Kabatiansky, and Y.~Miao, ``Existence and
  construction of complete traceability multimedia fingerprinting codes
  resistant to averaging attack and adversarial noise,'' \emph{Problems of
  Information Transmission}, vol.~56, no.~4, pp. 388--398, 2020.

\bibitem{liu2005multimedia}
K.~R. Liu, W.~Trappe, Z.~J. Wang, M.~Wu, and H.~Zhao, \emph{Multimedia
  fingerprinting forensics for traitor tracing}.\hskip 1em plus 0.5em minus
  0.4em\relax EURASIP Book Series on Signal Processing and Communications:
  Hindawi Publishing Corporation, 2005.

\bibitem{trappe2003anti}
W.~Trappe, M.~Wu, Z.~J. Wang, and K.~R. Liu, ``Anti-collusion fingerprinting
  for multimedia,'' \emph{IEEE Transactions on Signal Processing}, vol.~51,
  no.~4, pp. 1069--1087, 2003.

\bibitem{cheng2011anti}
M.~Cheng and Y.~Miao, ``On anti-collusion codes and detection algorithms for
  multimedia fingerprinting,'' \emph{IEEE transactions on information theory},
  vol.~57, no.~7, pp. 4843--4851, 2011.

\bibitem{cheng2011separable}
M.~Cheng, L.~Ji, and Y.~Miao, ``Separable codes,'' \emph{IEEE transactions on
  information theory}, vol.~58, no.~3, pp. 1791--1803, 2011.

\bibitem{egorova2019signature}
E.~Egorova, M.~Fernandez, G.~Kabatiansky, and M.~H. Lee, ``Signature codes for
  weighted noisy adder channel, multimedia fingerprinting and compressed
  sensing,'' \emph{Designs, Codes and Cryptography}, vol.~87, no.~2, pp.
  455--462, 2019.

\bibitem{fan2020signature}
J.~Fan, Y.~Gu, M.~Hachimori, and Y.~Miao, ``Signature codes for weighted binary
  adder channel and multimedia fingerprinting,'' \emph{IEEE Transactions on
  Information Theory}, vol.~67, no.~1, pp. 200--216, 2020.

\bibitem{egorova2021separable}
E.~E. Egorova and G.~A. Kabatiansky, ``Separable collusion-secure multimedia
  codes,'' \emph{Problems of Information Transmission}, vol.~57, no.~2, pp.
  178--198, 2021.

\bibitem{d1981coding}
A.~G. D'yachkov and V.~V. Rykov, ``On a coding model for a multiple-access
  adder channel,'' \emph{Problemy Peredachi Informatsii}, vol.~17, no.~2, pp.
  26--38, 1981.

\bibitem{bshouty2011parity}
N.~H. Bshouty and H.~Mazzawi, ``On parity check (0, 1)-matrix over z p,'' in
  \emph{Proceedings of the Twenty-Second Annual ACM-SIAM Symposium on Discrete
  Algorithms}.\hskip 1em plus 0.5em minus 0.4em\relax SIAM, 2011, pp.
  1383--1394.

\bibitem{hoeffding1994probability}
W.~Hoeffding, ``Probability inequalities for sums of bounded random
  variables,'' in \emph{The collected works of Wassily Hoeffding}.\hskip 1em
  plus 0.5em minus 0.4em\relax New York: Springer, 1994, pp. 409--426.

\end{thebibliography}
	

\end{document}